
\documentclass[conference]{IEEEtran}
%

\IEEEoverridecommandlockouts


%

%
\usepackage{cite}

%
\ifCLASSINFOpdf
   \usepackage[pdftex]{graphicx}
   \DeclareGraphicsExtensions{.pdf,.jpeg,.png}
\else
  \usepackage[dvips]{graphicx}
   \DeclareGraphicsExtensions{.eps}
\fi
%
%

%
\usepackage[cmex10]{amsmath}
\usepackage{amssymb}
\usepackage{amsthm}

\newtheorem{theo}{Theorem}
\newtheorem{lem}{Lemma}

%

%
\usepackage{algorithmic}

\usepackage{balance}
%
\usepackage{array}
\usepackage{fixltx2e}
\usepackage{url}


\hyphenation{op-tical net-works semi-conduc-tor}

\begin{document}
%
\title{An Achievable Rate for an Optical Channel with Finite Memory}

\author{\IEEEauthorblockN{K Gautam Shenoy, Vinod Sharma}
\IEEEauthorblockA{Dept. of ECE, Indian Institute of Science, Bangalore-560012\\
Email: konchady,vinod@ece.iisc.ernet.in}}


%


\maketitle

\begin{abstract}
A fiber optic channel is modeled in a variety of ways; from the simple additive white complex Gaussian noise model, to models that incorporate memory in the channel. Because of Kerr nonlinearity, a simple model is not a good approximation to an optical fiber. Hence we study a fiber optic channel with finite memory and provide an achievable bound on channel capacity that improves upon a previously known bound.
\end{abstract}

\begin{IEEEkeywords}
Channel capacity, finite memory, Gaussian noise, optical fiber.
\end{IEEEkeywords}

%
\IEEEpeerreviewmaketitle

\section{Introduction}
The study of the fiber optic channel (FOC) from an information theoretic standpoint has been an interesting subject in recent times. In particular, estimating the channel capacity, has been a very daunting task owing to several factors. Firstly, from \cite{Duris2,Kram2,Ess1}, there is no one agreed upon model which incorporates all the non-linear effects that occur in an optical fiber. Thus we have several models, each limited to an aspect under study. Secondly, for most of these models, the channel capacity is theoretically unknown and, while there may be good lower bounds available for these channels, only a few of them have useful upper bounds on channel capacity. 

Nonlinearities of a FOC from an information theoretic standpoint were first studied by Stark and Mitra in \cite{Strk1,Strk2}. Lower and upper bounds on channel capacity for free space optical intensity channels were obtained in \cite{lapi} under peak and average power constraints. Input dependent Gaussian noise channels were studied in \cite{Mos1} where tight upper and lower bounds on channel capacity were derived under a variety of power constraints. The conditions for these bounds were further refined in \cite{CDN}. In \cite{dar}, a lower bound on the capacity of an additive white complex Gaussian noise (AWCGN) channel with input perturbed by phase noise was derived. Also, an upper bound on the cascade of nonlinear and noisy channels was derived in \cite{Kram2}. A survey of various types of channels for optical fibers is carried out in \cite{Kram1}.
A more general fiber optical channel with memory (also known as the finite memory Gaussian noise model) was studied in \cite{Duris1}. The strategy used was that of block i.i.d. coding to mitigate the effects of memory. 

A trait of most, if not all, of the models is that the channel is modeled as an AWCGN channel with the necessary modifications. The effects of Kerr nonlinearity introduce memory in the channel and it is essential that it be taken into consideration. However, the tradeoff here is that introducing memory makes the channel harder to analyze and, for this channel, the capacity is unknown. 

In this paper, we study the model of \cite{Duris1} and provide a closed-form, improved lower bound to the capacity of a FOC with finite memory. The technique used is that of comparison with a suitable auxiliary channel \cite{arim,blah}. Furthermore, we also provide an algorithm derived from\cite{vs1,arn} to obtain numerical lower bounds.
 
The paper is organized as follows. We provide the notation and channel model in Section \ref{Prem}. A  lower bound on the capacity for the FOC with a finite memory is stated in Section \ref{Lb} and proved in Section \ref{proo}. Section \ref{algorz} describes an algorithm to compute a better achievable rate. Section \ref{numer} compares the lower bound derived in Section \ref{proo}, the rate computed via the algorithm in Section \ref{algorz} and the lower bound available in \cite{Duris1}. Section \ref{concl} concludes the paper.


\section{Preliminaries} \label{Prem}
\subsection{Notations}
We use lower case letters (e.g., $x$) to refer to scalars and upper case letters (e.g., X) for random variables. Vectors are represented using boldface letters (e.g., $\mathbf{x}$) and when it is required to indicate the length of a vector, we shall denote it by $\mathbf{x}^n :=(x_1,x_2,\cdots,x_n)$. Let $\mathcal{X}$ denote the input alphabet and $\mathcal{Y}$ the output alphabet of the channel. We denote the set of real numbers by $\mathbb{R}$ and complex numbers by $\mathbb{C}$.

A discrete time memoryless channel is represented by a random transformation $W(y|x),~ x\in \mathcal{X}, y \in \mathcal{Y}$, which for continuous alphabet, is the channel conditional probability density function. Given an input distribution $P$, we shall denote the corresponding output distribution by 
\begin{equation}
PW(y) := \int_{x \in \mathcal{X}} P(x)W(y|x) dx.
\label{outp}
\end{equation}

The differential entropy of a continuous random variable $X$ is denoted by $h(X)$ and the mutual information between random variables $X$ and $Y$ will be denoted by $I(X;Y)$. If the distributions need to be emphasized, we will use the equivalent notation $I(P_X;W)$ for mutual information, where $P_X$ is the input distribution or density and the channel is $W$. These notations are taken from \cite{cover,csis}. For the memoryless channel, the capacity
\begin{equation}
C= \sup_{P_X} I(P_X;W),
\label{Capori}
\end{equation}
where $P_X$ is the set of distributions that satisfy the needed constraints (e.g., average power constraints).

For any channel with finite memory, assuming it is information stable (see \cite{verdu}), the channel capacity
\begin{equation}
C = \lim_{n \to \infty} \sup_{P_{\mathbf{X}^n}} \frac{I(\mathbf{X}^n;\mathbf{Y}^n)}{n},
\label{chancap}
\end{equation}
where the supremum is taken over all input distributions that satisfy the power constraint. Now from the properties of mutual information, for any positive integer $N$,
\begin{equation}
\frac{I(\mathbf{X}^{n+N};\mathbf{Y}^{n+N})}{n+N}.\frac{n+N}{n} \geq \frac{I(\mathbf{X}^{n+N};\mathbf{Y}^n)}{n} \geq \frac{I(\mathbf{X}^{n};\mathbf{Y}^n)}{n}.
\label{cap3z}
\end{equation}
Thus we get capacity 
\begin{equation}
C = \lim_{n\to\infty} \sup_{P_{\mathbf{X}^{n+N}}}\frac{I(\mathbf{X}^{n+N};\mathbf{Y}^n)}{n}.
\label{capf}
\end{equation}
This is the form we will use in this paper.
\subsection{Kerr Nonlinearity in FOCs}
Wave propagation in an optical fiber is described by the non-linear Schr\"{o}dinger wave equation (see \cite{Agar,Ess1})
\begin{equation}
\frac{\partial x}{\partial z} + \frac{j\beta}{2}\frac{\partial^2 x }{\partial t^2} -j\gamma|x|^2x = 0,
\label{swe}
\end{equation}
where $x=x(t,z)$ is the complex envelope of the optical signal at time $t$ and distance $z$, $\beta$ represents the group velocity dispersion and $\gamma$ is the Kerr nonlinearity coefficient. While designing the channel model, Kerr non-linearity introduces memory in the channel and manifests itself as a third power of the superposition of current input as well as past and future inputs. Hence the channel model is non-causal.

Another phenomenon that occurs is that of Amplified Spontaneous Emission (ASE) in the optical fiber (see \cite{Ess1}). This is modeled as an additive noise (see Section \ref{chanmod}) in the channel.

\subsection{Channel Model}\label{chanmod}
The channel model we consider was proposed and studied in \cite{Duris1}. The model (see Fig. \ref{fig1}), after a suitable modification (see the remark below), is given by 
\begin{equation}
Y_i = X_i + A_i + Z_i\sqrt{\eta S_i^3},
\label{chan1}
\end{equation}
for $1\leq i\leq n$ where 
\begin{enumerate}
	\item $X_i \in \mathbb{C}$ is the input to the channel.
	\item $Y_i \in \mathbb{C}$ is the channel output.
	\item $A_i \in \mathbb{C}$ is the noise due to ASE modeled as circularly symmetric complex Gaussian random variable with 
	zero mean and variance $\sigma_A^2$. We assume $\{A_i\}$ are independent and identically distributed (i.i.d.).
	\item $Z_i \in \mathbb{C}$ is a circularly symmetric complex normal random variable with zero mean and unit variance. We assume $\{Z_i\}$ are i.i.d.
	\item $S_i \geq 0$ models the memory in the channel and is given by
	\begin{equation}
	S_i = \frac{1}{2N+1}\sum_{k=i-N}^{i+N} |X_k|^2,
	\label{Si}
	\end{equation}
	where $N \geq 0$ is the memory in the channel. When $i\leq N$, some of the indices $k$ in the sum may be negative (we take those corresponding $|X_k|^2$ as 0). However this does not affect our capacity analysis and so we may assume henceforth that $i>N$. Note that $\{S_i\}$ are not independent for $N\geq 1$. $\eta$ is a constant that depends on the non-linearity parameter $\gamma$ and the design of the optical fiber. Both $S_i$ and $\eta$ together model the Kerr nonlinearity.
	\item $\{X_i\}$, $\{A_i\}$ and $\{Z_i\}$ are mutually independent.
	\item We assume that the channel has an average input power constraint,
	\begin{equation}
	\frac{1}{n}\sum_{i=1}^n|X_i|^2 \leq P \quad a.s.
	\label{pcon}
	\end{equation}
\end{enumerate}
where $n$ is the codeword length.
\begin{figure}%
\centering
\includegraphics[scale=0.7]{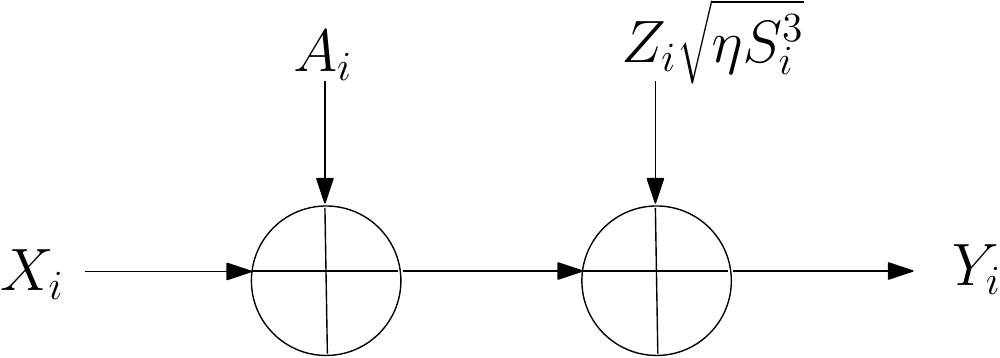}%
\caption{The finite memory optical channel.}%
\label{fig1}%
\end{figure}

We remark that this channel is functionally equivalent (i.e., has the same mutual information and capacity) as the following channel, 
\begin{equation}
Y_i = X_i +  Z_i\sqrt{\sigma_A^2+\eta S_i^3},
\label{orich}
\end{equation}
which was originally studied in \cite{Duris1}. This equivalence follows from the following lemma.
\begin{lem}
Consider two channels
\begin{enumerate}
	\item $Y_i = X_i + Z_i\sqrt{\sigma_A^2 + U_i}$, \label{c1}
	\item $T_i = X_i + A_i\sigma_A +  V_i\sqrt{ U_i}$, \label{c2}
\end{enumerate}
where $X_i$ is the input to the channel, $Z_i$, $A_i$ and $V_i$ are i.i.d. circularly symmetric complex Gaussian random variables with zero mean and unit variance and are mutually independent of each other. Let $U_i$ be a non negative random variable which is  independent of $Z_i$, $A_i$ and $V_i$ but is dependent on $\{X_k\}_{k=i-N}^{i+N}$. Then it follows that for any $n\geq 1$,
\begin{equation}
I(\mathbf{X}^{n+N} ; \mathbf{Y}^{n}) = I(\mathbf{X}^{n+N} ; \mathbf{T}^{n}).
\label{inf1}
\end{equation}
Moreover, this implies that the two channels have the same capacity.
\end{lem}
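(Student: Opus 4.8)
The plan is to prove the stronger statement that the two channels induce the \emph{same} conditional law of the output given the input, i.e. that $P_{\mathbf{Y}^{n}\mid\mathbf{X}^{n+N}}$ and $P_{\mathbf{T}^{n}\mid\mathbf{X}^{n+N}}$ coincide as probability kernels. Once this is in hand, (\ref{inf1}) is immediate, since $I(\mathbf{X}^{n+N};\cdot)$ depends on the joint distribution only through the (common) input law $P_{\mathbf{X}^{n+N}}$ and the channel kernel. The capacity claim then follows by taking suprema over the same admissible input set (those satisfying the power constraint (\ref{pcon})) and passing to the limit in (\ref{capf}), which expresses the finite-memory capacity purely in terms of these mutual informations.

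The first step is to condition additionally on the memory vector $\mathbf{U}^{n} := (U_1,\dots,U_n)$. Given $\mathbf{X}^{n+N}$ and $\mathbf{U}^{n}$, channel \ref{c1} produces $Y_i = X_i + Z_i\sqrt{\sigma_A^2+U_i}$; because $\{Z_i\}$ is i.i.d. and independent of $(\mathbf{X}^{n+N},\mathbf{U}^{n})$, the outputs $Y_i$ are conditionally independent across $i$, each circularly symmetric complex Gaussian with mean $X_i$ and variance $\sigma_A^2+U_i$. For channel \ref{c2}, the output is $T_i = X_i + A_i\sigma_A + V_i\sqrt{U_i}$; since $\{A_i\}$ and $\{V_i\}$ are jointly i.i.d. and independent of $(\mathbf{X}^{n+N},\mathbf{U}^{n})$, the $T_i$ are likewise conditionally independent across $i$ given $(\mathbf{X}^{n+N},\mathbf{U}^{n})$.

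The crux is then a single-coordinate distributional identity: for a fixed $u\geq 0$, the variable $A\sigma_A + V\sqrt{u}$ is the sum of two independent circularly symmetric complex Gaussians of variances $\sigma_A^2$ and $u$, hence is itself circularly symmetric complex Gaussian with variance $\sigma_A^2+u$, which is exactly the law of $Z\sqrt{\sigma_A^2+u}$. Applying this coordinatewise shows that the two conditional kernels agree, $P_{\mathbf{Y}^{n}\mid\mathbf{X}^{n+N},\mathbf{U}^{n}} = P_{\mathbf{T}^{n}\mid\mathbf{X}^{n+N},\mathbf{U}^{n}}$, both being the product over $i$ of the circularly symmetric complex Gaussian law with mean $X_i$ and variance $\sigma_A^2+U_i$. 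Because $U_i$ is independent of the noise variables and depends only on the inputs, the law $P_{\mathbf{U}^{n}\mid\mathbf{X}^{n+N}}$ is the same object in both channels; averaging the identical conditional kernels against this common law yields $P_{\mathbf{Y}^{n}\mid\mathbf{X}^{n+N}} = P_{\mathbf{T}^{n}\mid\mathbf{X}^{n+N}}$, establishing (\ref{inf1}) and therefore equality of capacities.

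I expect the main obstacle to be the careful justification of the conditional independence across the index $i$ in channel \ref{c2}, where two distinct noise families $\{A_i\}$ and $\{V_i\}$ are combined through the input-dependent weight $\sqrt{U_i}$; the argument hinges on cleanly separating the randomness that is a function of the inputs (the $U_i$) from the exogenous i.i.d. Gaussian noise, so that conditioning on $\mathbf{U}^{n}$ renders the coordinates independent. The accompanying Gaussian-sum identity is elementary (equality of circularly symmetric complex Gaussian characteristic functions, or direct convolution of densities), and the remaining passage from equal kernels to equal mutual information and equal capacity is routine bookkeeping.
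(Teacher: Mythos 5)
Your proposal is correct and follows essentially the same route as the paper's proof: condition on the inputs (which fixes $\mathbf{U}^n$), use independence of the exogenous noise to factorize the output law into independent coordinates, apply the single-coordinate identity that $A_i\sigma_A + V_i\sqrt{u_i}$ and $Z_i\sqrt{\sigma_A^2+u_i}$ are both zero-mean Gaussian with variance $\sigma_A^2+u_i$, and then integrate back out to conclude equality of the joint laws, hence of mutual information and capacity via (\ref{capf}). The only cosmetic differences are that you phrase the argument as equality of channel kernels rather than of joint CDFs, and you handle circularly symmetric complex Gaussians directly (via the sum-of-independent-Gaussians identity) where the paper first proves the real case and then splits into real and imaginary parts.
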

\begin{proof}
We shall first prove the result for real valued random variables, replacing circularly symmetric Gaussian with standard Gaussian. To show that the mutual informations are equal, it suffices to show that 
\begin{equation}
(\mathbf{X}^{n+N},\mathbf{Y}^{n}) \stackrel{d}{=} (\mathbf{X}^{n+N} , \mathbf{T}^{n}),
\label{diste}
\end{equation}
where $\stackrel{d}{=}$ denotes equal in distribution.
As we are working with vectors, we will introduce some useful shorthand notation. We shall write 
\begin{IEEEeqnarray}{rCl}
\{[Y]\leq [y]\} &=& \{Y_1 \leq y_1, Y_2\leq y_2 ,\cdots, Y_n \leq y_n\} \\
\{[X_N] \leq [x_N]\} &=& \{X_1 \leq x_1,\cdots, X_{n+N} \leq x_{n+N}\}
\label{shor}
\end{IEEEeqnarray}

Hence we have
\begin{IEEEeqnarray}{rCl}
&&Pr([X_N] \leq [x_N],[Y]\leq [y]) \\
&=& \int_{-\infty}^{x_1}\int_{-\infty}^{x_2}\cdots\int_{-\infty}^{x_{n+N}}f_{[X_N]}([x_N])\notag \\
&&\times Pr([Y]\leq [y]|[X_N] = [x_N])d[y]d[x_N].
\label{equ2}
\end{IEEEeqnarray}
Now consider $Pr([Y]\leq [y]|[X_N] = [x_N])$. This is simplified as 
\begin{IEEEeqnarray}{rCl}
&&Pr([Y]\leq [y]|[X_N] = [x_N]) \\
&=&Pr([X+Z\sqrt{\sigma_A^2 + U}]\leq [y]|[X_N] = [x_N]) \\
&=&Pr([Z\sqrt{\sigma_A^2 +u}]\leq [y-x]|[X_N] = [x_N]) \\
&=&Pr([Z\sqrt{\sigma_A^2 +u}]\leq [y-x]),
\label{eq3}
\end{IEEEeqnarray}
where given $[X_N]=[x_N]$, $[U]$ becomes a constant vector $[u]$ and in the last step, we have used the independence of $[Z]$. 

Now we have
\begin{IEEEeqnarray}{rCl}
 &&Pr\left([Z\sqrt{\sigma_A^2 + u}]\leq [y-x]\right)\\
&=& \prod_{i=1}^n Pr\left(Z_i\sqrt{\sigma_A^2 + u_i}\leq y_i-x_i\right)\\
&=&\prod_{i=1}^n Pr\left(A_i\sigma_A + V_i\sqrt{ u_i}\leq y_i-x_i\right)\label{equ3}\\
&=&Pr([A\sigma_A +V\sqrt{ u}]\leq [y-x]),
\label{equ4}
\end{IEEEeqnarray}
where in (\ref{equ3}), we used the fact that two Gaussian random variables have the same distribution if they have the same mean and variance. Substituting this in (\ref{equ2}), and essentially reversing the steps gives us the required result.

To extend for complex random variables, we need to further split each complex Gaussian variable into its real and imaginary part and use that a circularly symmetric complex Gaussian is obtained by the joint density of two independent zero mean real valued Gaussian random variables with each having half the total variance. 

Putting it all together, we get 
\begin{equation}
I(\mathbf{X}^{n+N} ; \mathbf{Y}^{n}) = I(\mathbf{X}^{n+N} ; \mathbf{T}^{n}),
\label{inf2}
\end{equation}
for any joint valid input distribution $f_{\mathbf{X}^{n+N}}$. 

From (\ref{capf}) and (\ref{inf2}) we get the desired results.
\end{proof}
\section{Lower Bound on Channel Capacity}\label{Lb}
Since we are concerned with achievability, we can lower bound capacity as
\begin{equation}
C \geq \lim_{n \to \infty} \frac{I(\mathbf{X}^{n+N};\mathbf{Y}^n)}{n},
\label{capb1}
\end{equation}
for any input distribution $P_{\mathbf{X}^{n+N}}$ that satisfies the power constraint.

Given $\delta > 0$ small enough so that $P-\delta > 0$, let us pick $X_i$ i.i.d. distributed as per a complex Gaussian distribution with mean $0$ and variance $P-\delta$. This will ensure with high probability that the codewords generated will satisfy the average power constraint for sufficiently large $n$. 

We rewrite, for $i>N$,
\begin{equation}
S_i =\frac{P-\delta}{2(2N+1)}\sum_{k=i-N}^{i+N} 2(X_{kR}^2 + X_{kI}^2)/(P-\delta),
\label{Sia}
\end{equation}
where $X_{kR}$ and $X_{kI}$ are the real and imaginary part respectively of $X_k$.
Ignoring the scaling coefficient $\frac{P-\delta}{2(2N+1)}$, we see that under our choice of input distribution, the sum on RHS has a standard central chi-squared distribution with  $4N+2$ degrees of freedom. We then have $\mathbb{E}[S_i^3] = \overline{S}(P-\delta)$, where
\begin{equation}
\overline{S}(P) \triangleq   \frac{P^3(2N+3)(2N+2)}{(2N+1)^2}.
\label{si3}
\end{equation}
Note that the RHS above does not depend on the index $i$.

We now state the main theorem of this paper which we will prove in the next section.
\begin{theo}\label{mainth}
The capacity $C$ of the channel described in (\ref{chan1}), under average power constraint $P$ is lower bounded by
\begin{equation}
 \log_2\left(1+\frac{\min\{P,P_N^*\}}{\sigma_A^2+\eta \overline{S}(\min\{P,P_N^*\})}\right),
\label{LB}
\end{equation}
where $\overline{S}(P)$ is as given in (\ref{si3}) and 
\begin{equation}
P_N^* = \left(\frac{\sigma_A^2(2N+1)^2}{2\eta(2N+3)(2N+2)}\right)^{1/3}. \hfill \qed
\label{popt}
\end{equation}
\end{theo}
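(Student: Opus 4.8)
The plan is to work throughout with the functionally equivalent channel (\ref{orich}), which is legitimate by the preceding Lemma, and to combine the achievability bound (\ref{capb1}) with an auxiliary-channel comparison. By (\ref{capb1}) it suffices, for each fixed $Q \in (0,P]$, to exhibit an i.i.d. input yielding at least the rate $\log_2(1 + Q/(\sigma_A^2 + \eta\overline{S}(Q)))$, and then to optimize over $Q$. Accordingly I would take $X_i$ i.i.d. $\mathcal{CN}(0,Q)$; by the law of large numbers the empirical power $\frac1n\sum_i|X_i|^2 \to Q$ almost surely, so for $Q$ chosen (if necessary) a margin $\delta$ below $P$ the constraint (\ref{pcon}) holds for all large $n$ with probability tending to one, and the vanishing fraction of offending codewords is discarded with no loss of rate. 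Writing $\nu := \sigma_A^2 + \eta\overline{S}(Q)$ for the long-run average noise variance, which is finite and index-independent by (\ref{si3}) since $\mathbb{E}[S_i^3] = \overline{S}(Q)$, sets up a comparison with a memoryless AWGN auxiliary channel of noise variance $\nu$.

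The key tool is the auxiliary-channel lower bound \cite{arim,blah,arn}: for the true channel $W$, input density $P$, and \emph{any} reverse channel $V(\mathbf{x}^{n+N}\mid\mathbf{y}^n)$,
\begin{equation}
I(\mathbf{X}^{n+N};\mathbf{Y}^n) \;\geq\; \mathbb{E}\!\left[\log_2\frac{V(\mathbf{X}^{n+N}\mid\mathbf{Y}^n)}{P(\mathbf{X}^{n+N})}\right],
\label{auxlb}
\end{equation}
the gap being the nonnegative expected divergence between the true posterior and $V$. I would choose the auxiliary channel to be the \emph{memoryless} AWGN channel $\tilde{Y}_i = X_i + G_i$ with $G_i \sim \mathcal{CN}(0,\nu)$ i.i.d., and take $V$ to be the product of its per-letter Gaussian posteriors $V_i(x_i\mid y_i) = \mathcal{CN}\big(\tfrac{Q}{Q+\nu}y_i,\tfrac{Q\nu}{Q+\nu}\big)$ for $i \leq n$, times the priors $P(x_j)$ for the $N$ excess inputs indexed $n < j \leq n+N$. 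The excess factors cancel against $P(\mathbf{X}^{n+N})$, so the right-hand side of (\ref{auxlb}) collapses to $\sum_{i=1}^n \mathbb{E}\big[\log_2(V_i(X_i\mid Y_i)/P(X_i))\big]$, a sum of per-letter terms evaluated under the \emph{true} joint law of $(X_i,Y_i)$.

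Each per-letter term is a ratio of Gaussian densities, so expanding it leaves the constant $\log_2\frac{Q+\nu}{\nu}$ plus quadratic expressions in $X_i$ and $Y_i$. Under the true law the input $X_i$ and the noise $Z_i\sqrt{\sigma_A^2+\eta S_i^3}$ are uncorrelated (as $Z_i$ is zero-mean and independent), whence $\mathbb{E}|Y_i|^2 = Q+\nu$, $\mathbb{E}[X_i\overline{Y_i}] = Q$, and the MMSE-type identity $\mathbb{E}\,|X_i - \tfrac{Q}{Q+\nu}Y_i|^2 = \tfrac{Q\nu}{Q+\nu}$ holds; these make the quadratic terms cancel exactly, so each per-letter term equals $\log_2(1+Q/\nu)$. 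The only care needed is for the $O(N)$ boundary indices where the sum defining $S_i$ in (\ref{Si}) is truncated and $\mathbb{E}[S_i^3]$ differs from $\overline{S}(Q)$; since there are at most $N$ such letters and every second moment is finite, their total contribution is $O(N)$ and vanishes after division by $n$. Dividing (\ref{auxlb}) by $n$ and letting $n\to\infty$ therefore gives the achievable rate $\log_2\big(1 + Q/(\sigma_A^2+\eta\overline{S}(Q))\big)$.

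It remains to optimize $g(Q) := Q/(\sigma_A^2 + \eta\overline{S}(Q))$ over $0 < Q \leq P$. Since $\overline{S}(Q)$ is proportional to $Q^3$, $g$ vanishes at $0$ and at $\infty$ and has a unique interior stationary point, obtained by setting the numerator of $g'$ to zero; this point is precisely $P_N^*$ of (\ref{popt}). Hence $g$ increases on $(0,P_N^*]$ and decreases thereafter, so over the admissible range the maximizer is $Q = \min\{P,P_N^*\}$, and letting $\delta\downarrow 0$ by continuity yields the bound (\ref{LB}). The main obstacle is conceptual rather than computational: because the noise variance $\sigma_A^2+\eta S_i^3$ depends on the input through the memory (\ref{Si}), the usual entropy-power route (``Gaussian is the worst additive noise'') does not apply directly; the auxiliary reverse-channel bound sidesteps this, and the crux is verifying that the mismatch penalty in (\ref{auxlb}) vanishes for the Gaussian choice of $V$, which reduces to the second-moment cancellation above.
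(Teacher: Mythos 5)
Your proposal is correct and takes essentially the same approach as the paper: the Arimoto--Blahut auxiliary-channel bound with a memoryless AWCGN auxiliary channel of constant noise variance $\sigma_A^2+\eta\overline{S}$, i.i.d. complex Gaussian inputs, exact cancellation of the quadratic terms via the second-moment identities, and maximization of $Q \mapsto Q/(\sigma_A^2+\eta\overline{S}(Q))$ yielding $\min\{P,P_N^*\}$. Your reverse-channel (posterior) form of the bound is Bayes-equivalent to the paper's forward form $\mathbb{E}\left[\log\left(\overline{W}(Y|X)/P_X\overline{W}(Y)\right)\right]$, and your direct use of input power $Q=\min\{P,P_N^*\}$ substitutes for the paper's capacity-monotonicity argument; these are only cosmetic differences.
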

A quick look at the bound reveals that it may be the capacity of an equivalent AWCGN channel of some sort. This is exactly the strategy we use to obtain the  lower bound. In the following, we further improve the bound (\ref{LB}) by considering a mixture of complex Gaussian inputs.

\subsection{Mixture of Complex Gaussian Inputs}
Given a positive integer $K$, non negative weights $\{\alpha_k\}$ that sum to $1$, $\mu_k \in \mathbb{C}$ and covariance matrices $\mathbf{P}_k$, we say that a random variable $X$ has a complex Gaussian mixture (CGM) distribution of order $K$ with the aforementioned parameters if
\begin{equation}
X \sim \sum_{k=1}^K \alpha_k \mathcal{CN}(\mu_k;\mathbf{P}_k),
\label{GMM1}
\end{equation}
where $\mathcal{CN}(\mu_k;\mathbf{P}_k)$ is the pdf. of a complex Gaussian random variable with mean $\mu_k$ and covariance matrix $\mathbf{P}_k$. Note that we will only consider \emph{proper} complex Gaussians i.e. complex Gaussian random variables with vanishing relation matrices and hence it suffices to specify only the mean and covariance matrix. 

There are several salient features of CGM distributions. 
\begin{enumerate}
	\item They are dense (in the sense of weak convergence) in the space of all proper distributions in $\mathbb{C}$ (Refer Appendix for a proof). Thus by studying the achievable rates obtained via CGM distributions, one can get good lower bounds on channel capacity.
	\item The circularly symmetric complex Gaussian, which was the input distribution used earlier, is a special case of a CGM with zero mean, $K=1$ and $\mathbf{P}_k=\frac{P}{2}I_{2\times 2}$. Thus by optimizing over the coefficients in the mixture, we can only improve our results.
	\item When a CGM is input to an AWCGN channel, the output is also a corresponding CGM. As our main strategy was to compare with an auxiliary channel, which was chosen to be AWCGN, this will help simplify the analysis.
 \end{enumerate}

Repeating the analysis in Section \ref{proo} but with i.i.d. CGM distribution with weights $\{\alpha_k\}$, mean $\{\mu_k\}$ and covariance matrices $\mathbf{P}_k=\frac{P_k}{2}I_{2\times 2}$, satisfying the power constraint, as input, we describe the following optimization problem.
\begin{equation}
\begin{aligned}
& \underset{\{\alpha_k,\mu_k,P_k\}}{\text{maximize}}
& & -\mathbb{E}\left[\log\left(\sum_{k=1}^K \alpha_k \frac{\exp\left\{ -\frac{|Y-\mu_k|^2}{(P_k + \sigma_A^2 +\eta \overline{S}^{(K)})}  \right\}}{\sqrt{\pi (P_k + \sigma_A^2 +\eta \overline{S}^{(K)})}}\right)\right] \\ 
& & & -\log\left(\pi e(\sigma_A^2 + \eta \overline{S}^{(K)})\right) \\
& \text{such that}
& &\sum_{k=1}^K \alpha_k =1,~ \sum_{k=1}^K \alpha_k \mu_k = 0, \\ 
& & & \sum_{k=1}^K \alpha_k(P_k + |\mu_k|^2) = P, ~ \alpha_k \geq 0, P_k \geq 0 ~\forall k.
\end{aligned} \label{optz}
\end{equation}
where $\overline{S}^{(K)}=\mathbb{E}[S_i^3]$ with CGM inputs. We discuss the results in Section \ref{numer}.

\section{Proof of Theorem \ref{mainth}} \label{proo}
We first introduce a lemma that appeared in papers by Arimoto \cite{arim} and Blahut \cite{blah}.
\begin{lem}[Auxiliary Channel Lower Bound (ACLB)] \label{topsl}
Let $W$ be the original channel and $\overline{W}$ any other channel with input and output alphabet of $W$. Then given any input distribution $P_X$, we have
\begin{equation}
I(P_X;W) \geq \mathbb{E}\left[\log\left( \frac{\overline{W}(Y|X)}{P_X\overline{W}(Y)}\right)\right],
\label{ACLB}
\end{equation}
where the expectation is with respect to $P_X(x)W(y|x)$. 
\end{lem}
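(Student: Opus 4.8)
The plan is to show that the gap between $I(P_X;W)$ and the proposed lower bound is nonnegative by collapsing it into a single expectation of a log-ratio and applying Jensen's inequality. First I would write the mutual information in its standard form $I(P_X;W)=\mathbb{E}\left[\log\bigl(W(Y|X)/P_XW(Y)\bigr)\right]$, where the expectation is under the true joint law $P_X(x)W(y|x)$ and $P_XW$ denotes the true output law defined in (\ref{outp}). Subtracting the claimed bound term by term inside one expectation and combining logarithms yields
\begin{equation}
I(P_X;W)-\mathbb{E}\left[\log\frac{\overline{W}(Y|X)}{P_X\overline{W}(Y)}\right]=\mathbb{E}\left[\log\frac{W(Y|X)\,P_X\overline{W}(Y)}{P_XW(Y)\,\overline{W}(Y|X)}\right],
\label{gapeq}
\end{equation}
so it suffices to prove the right-hand side is nonnegative.

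Setting $R:=P_XW(Y)\,\overline{W}(Y|X)/\bigl(W(Y|X)\,P_X\overline{W}(Y)\bigr)$, the right-hand side of (\ref{gapeq}) equals $-\mathbb{E}[\log R]$, and by concavity of the logarithm (Jensen's inequality) $\mathbb{E}[\log R]\leq\log\mathbb{E}[R]$. The crux is the evaluation of $\mathbb{E}[R]$, which I expect to collapse to exactly $1$: integrating $R$ against the true joint density, the factors $W(y|x)$ cancel, the inner integral $\int P_X(x)\overline{W}(y|x)\,dx$ reduces to $P_X\overline{W}(y)$ by the definition (\ref{outp}) and cancels the remaining denominator, leaving $\int P_XW(y)\,dy=1$. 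Then $\mathbb{E}[\log R]\leq\log 1=0$, so the gap in (\ref{gapeq}) is nonnegative and (\ref{ACLB}) follows. Equivalently, the same gap is the Kullback--Leibler divergence $D(P_{XY}\,\|\,Q_{XY})$ between the true joint law and the density $Q_{XY}(x,y)=P_X(x)\overline{W}(y|x)\,P_XW(y)/P_X\overline{W}(y)$, which is a legitimate joint density precisely because the computation above shows it integrates to $1$; its nonnegativity is then immediate.

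The main obstacle is technical rather than conceptual. The manipulation is valid only where the integrand is well defined, so I would first dispose of the degenerate case by noting that if $\overline{W}(\cdot|x)$ fails to dominate $W(\cdot|x)$ on a set of positive $P_X$-measure, then $\log\overline{W}(Y|X)$ is $-\infty$ with positive probability, the right-hand side of (\ref{ACLB}) equals $-\infty$, and the inequality is trivial; otherwise the log-ratio is finite almost surely. I would also invoke Fubini's theorem to justify the interchange of the order of integration in evaluating $\mathbb{E}[R]$. For the continuous alphabets of interest here, all sums are read as integrals against the conditional densities, consistent with the density interpretation of $W$ adopted in the preliminaries, and the complex case reduces to the real one by the real/imaginary splitting already used in the earlier channel-equivalence argument.
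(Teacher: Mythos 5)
Your proof is correct, and it is essentially the argument the paper relies on: the paper does not prove this lemma itself but defers to \cite{arn}, where the proof is exactly your computation --- writing the gap as an expected log-ratio, identifying it as a Kullback--Leibler divergence, and using Jensen's inequality to conclude nonnegativity. One small correction: where $W(\cdot|x)$ vanishes but $\overline{W}(\cdot|x)$ does not, the cancellation of $W(y|x)$ restricts the integral to the support of the true joint law, so in general you get $\mathbb{E}[R]\leq 1$ rather than $\mathbb{E}[R]=1$; this is harmless, since $\log\mathbb{E}[R]\leq 0$ is all the argument needs.
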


A proof for the above lemma may be found in \cite{arn}. The advantage of this lemma is that the capacity of a ``difficult'' channel may be lower bounded by the capacity of a ``manageable'' channel. The accuracy of this approximation, depends on the choice of the auxiliary channel. 

Consider a new channel $\overline{W}$, where the output $\overline{Y}_i$ is given by 
\begin{equation}
\overline{Y}_i = X_i + A_i + Z_i\sqrt{\eta \overline{S}(P-\delta)},
\label{auxch}
\end{equation}
where $\overline{S}(P-\delta)$ is now a constant, given $P$, $\delta$ and $N$. Unlike (\ref{chan1}), this is a memoryless AWCGN channel and it is one of the simplest channels to analyze, making it a good choice for an auxiliary channel. 
Before we proceed, we will prove the following lemma.
\begin{lem}\label{varl}
Under $\{X_i\}$ i.i.d. complex Gaussian inputs as defined in Section \ref{Lb}, we have
\begin{enumerate}
	\item $\mathbb{E}[Y_i] = \mathbb{E}[\overline{Y}_i] = 0, $ \label{mean1}
	\item 	$\mathbb{E}[|Y_i|^2] = \mathbb{E}[|\overline{Y}_i|^2] = P-\delta+\sigma_A^2 +\eta  \overline{S}(P-\delta), $ \label{eqvar}
	\item $\mathbb{E}[|Y_i - X_i|^2] = \mathbb{E}[|\overline{Y}_i -X_i|^2] = \sigma_A^2 +\eta  \overline{S}(P-\delta). $ \label{eqvar2}
	\end{enumerate}
\end{lem}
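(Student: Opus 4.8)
The plan is to verify all three identities by direct moment computations, exploiting the zero-mean property and the mutual independence structure of the channel variables. The key structural observation is that although $X_i$ appears inside $S_i$ (so $X_i$ and $S_i$ are dependent), the multiplicative noise factor $Z_i$ and the additive noise $A_i$ are each independent of $S_i$, since $S_i$ is a function of $\{X_k\}$ alone. This independence is precisely what forces every cross term to vanish, and it is the only point that requires care.

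For part (\ref{mean1}) I would apply linearity of expectation to $Y_i = X_i + A_i + Z_i\sqrt{\eta S_i^3}$. The first two summands have mean zero, and for the third I would factor $\mathbb{E}[Z_i\sqrt{\eta S_i^3}] = \mathbb{E}[Z_i]\,\mathbb{E}[\sqrt{\eta S_i^3}] = 0$, using $\mathbb{E}[Z_i]=0$ together with the independence of $Z_i$ and $S_i$. The identical argument with $S_i$ replaced by the constant $\overline{S}(P-\delta)$ settles $\overline{Y}_i$.

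For part (\ref{eqvar}) I would expand $\mathbb{E}[|Y_i|^2]$. The three square terms contribute $\mathbb{E}[|X_i|^2]=P-\delta$, $\mathbb{E}[|A_i|^2]=\sigma_A^2$, and $\eta\,\mathbb{E}[S_i^3]\,\mathbb{E}[|Z_i|^2]=\eta\,\overline{S}(P-\delta)$, where the last step uses $\mathbb{E}[|Z_i|^2]=1$, the independence of $Z_i$ and $S_i$, and the identity $\mathbb{E}[S_i^3]=\overline{S}(P-\delta)$ established just before the theorem. Every cross term vanishes: the term $\mathbb{E}[X_i\overline{A_i}]$ (with $\overline{\,\cdot\,}$ complex conjugation) is zero by independence and zero means, while the two cross terms carrying a factor $Z_i$ or $\overline{Z_i}$ factor out an $\mathbb{E}[Z_i]=0$, because $Z_i$ is independent of $X_i$, $A_i$, and $S_i$ alike. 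For $\overline{Y}_i$ the computation is identical with $\eta\overline{S}(P-\delta)$ appearing directly in place of $\eta\,\mathbb{E}[S_i^3]$, so the two second moments coincide exactly because $\mathbb{E}[S_i^3]=\overline{S}(P-\delta)$.

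Part (\ref{eqvar2}) follows the same template applied to $Y_i - X_i = A_i + Z_i\sqrt{\eta S_i^3}$ and $\overline{Y}_i - X_i = A_i + Z_i\sqrt{\eta\overline{S}(P-\delta)}$: the square terms give $\sigma_A^2 + \eta\overline{S}(P-\delta)$, and the lone cross term $\mathbb{E}[A_i\overline{Z_i}\sqrt{\eta S_i^3}]$ again vanishes through $\mathbb{E}[\overline{Z_i}]=0$. There is no genuine obstacle here; the substance of the lemma lies entirely in recognizing that matching the two channels' second moments reduces to the already-proved moment identity $\mathbb{E}[S_i^3]=\overline{S}(P-\delta)$, and that the dependence between $X_i$ and $S_i$ never interferes, since it is always the independent factor $Z_i$ that annihilates the cross terms.
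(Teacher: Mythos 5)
Your proposal is correct and follows essentially the same route as the paper's proof: linearity of expectation for the means, the independence of $Z_i$ from $S_i$ (a function of the $X_k$'s alone) to factor and annihilate cross terms, and the moment identity $\mathbb{E}[S_i^3]=\overline{S}(P-\delta)$ for the second moments. The paper phrases part 2 as ``$X_i$ and $Z_i\sqrt{\eta S_i^3}$ are uncorrelated'' and obtains part 3 by setting $X_i=0$ in that computation, which is exactly your direct expansion in different words.
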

\begin{proof}
We have
\begin{IEEEeqnarray}{rCl}
\mathbb{E}[Y_i] &=& \mathbb{E}\left[X_i + A_i + Z_i\sqrt{\eta S_i^3}\right]  \notag \\
&=&   \mathbb{E}[X_i] + \mathbb{E}[A_i] + \mathbb{E}[Z_i]\mathbb{E}\left[\sqrt{\eta S_i^3}\right] \label{ind1}\\
&=&0,
\label{meanp1}
\end{IEEEeqnarray}
where in (\ref{ind1}) we used the fact that $Z_i$ and $S_i$ are independent. Similarly, we can show that $\mathbb{E}[\overline{Y}_i] = 0$. 

For the second part, we observe that even though $X_i$ and $Z_i\sqrt{\eta S_i^3}$ are dependent variables, they are uncorrelated. They are both zero mean and 
\begin{equation}
\mathbb{E}[X_iZ_i^*\sqrt{\eta S_i^3}] = \mathbb{E}[Z_i^*]\mathbb{E}[X_iZ_i\sqrt{\eta S_i^3}]=0.
\label{uncorr1}
\end{equation}
Hence we have
\begin{IEEEeqnarray}{rCl}
\mathbb{E}[|Y_i|^2] = Var(Y_i) &=& \mathbb{E}[|X_i|^2] +  \mathbb{E}[|A_i|^2] + \eta \mathbb{E}[|Z_i|^2] \mathbb{E}[S_i^3] \notag\\
&=& P-\delta + \sigma_A^2 + \eta \overline{S}(P-\delta).
\label{varcal}
\end{IEEEeqnarray}
Similarly, we get the same value for $\mathbb{E}[|\overline{Y}_i|^2]$.

The proof of the third part is the same as the proof of the second part with $X_i=0$ in (\ref{varcal}).
\end{proof}

Now we consider $I(\mathbf{X}^{n+N};\mathbf{Y}^{n})$. Extending Lemma \ref{topsl} to vectors, we get
\begin{equation}
I(\mathbf{X}^{n+N};\mathbf{Y}^{n}) \geq \mathbb{E}\left[\frac{\overline{W}(\mathbf{Y}^n|\mathbf{X}^{n+N})}{P_{\mathbf{X}}\overline{W}(\mathbf{Y}^n)}\right],
\label{topvec}
\end{equation}
where the expectation is taken with respect to the joint density of the input distribution and the original channel density. Note that we have
\begin{equation}
\overline{W}(\mathbf{y}^n|\mathbf{x}^{n+N}) = \overline{W}(\mathbf{y}^n|\mathbf{x}^{n}) = \prod_{i=1}^n \overline{W}(y_i|x_i)
\label{dmc1}
\end{equation}
and
\begin{equation}
P_{\mathbf{X}}\overline{W}(\mathbf{y}^n) = \prod_{i=1}^n P_X\overline{W}(y_i),
\label{dmc2}
\end{equation}
due to the iid nature of inputs and because $\overline{W}$ is a memoryless channel. Substituting these values in (\ref{topvec}), yields
\begin{IEEEeqnarray}{rCl}
I(\mathbf{X}^{n+N};\mathbf{Y}^{n}) &\geq& n\log_2\left(1+\frac{P-\delta}{\sigma_A^2+\eta \overline{S}(P-\delta)}\right) \notag\\
&-& \mathbb{E}\left[ \frac{\|Y^n-X^n\|^2}{(\sigma_A^2 +\eta \overline{S}(P-\delta))}\right]\log_2(e) \label{deriv} \\
&+&  \mathbb{E}\left[ \frac{\|Y^n\|^2}{(P-\delta+\sigma_A^2 +\eta \overline{S}(P-\delta))}\right]\log_2(e).\notag
\end{IEEEeqnarray}
We have by Lemma \ref{varl},
\begin{IEEEeqnarray}{rCl}
\mathbb{E}[\|Y^n\|^2] &=& \sum_{k=1}^n \mathbb{E}[|Y_i|^2] \notag\\
&=& n(P-\delta+\sigma_A^2 +\eta  \overline{S}(P-\delta))
\label{linex1}
\end{IEEEeqnarray}
and 
\begin{equation}
\mathbb{E}[\|Y^n-X^n\|^2]  = n(\sigma_A^2 +\eta  \overline{S}(P-\delta)).
\label{linex2}
\end{equation}
Putting it all together and after taking limits as $n \to \infty$, we take $\delta \to 0$. Thus we get
\begin{equation}
C \geq \lim_{n \to \infty}\frac{I(\mathbf{X}^{n+N};\mathbf{Y}^{n})}{n} \geq \log_2\left(1+\frac{P}{\sigma_A^2+\eta \overline{S}(P)}\right).
\label{semif}
\end{equation}
The function on RHS, as a function of $P$, attains its maximum at $P_N^*$ given in (\ref{popt}). For $P<P_N^*$, it is a monotone increasing function of $P$. However after $P_N^*$, it decreases towards $0$. Given powers $P_1$ and $P_2$ such that $P_1<P_2$, we note that a codeword satisfying average power constraint $P_1$ for this channel, also satisfies the constraint for $P_2$. Hence an achievable rate for $P_1$ is also achievable under $P_2$. This means that for this channel, the capacity is monotone non-decreasing as a function of average power constraint $P$. 


With the above argument, we have finally proved (\ref{LB}).
\subsection{Optimality on choice of variance}
While using Lemma \ref{topsl}, the auxiliary channel was chosen by replacing $Z_i\sqrt{\eta S_i^3}$ with $Z_i\sqrt{\eta \overline{S}(P-\delta)}$. Now what if there is a constant $V$ such that replacing $\overline{S}(P-\delta)$ with $V$ leads to a better lower bound? It turns out that $\overline{S}(P-\delta)$ is the optimal choice for that constant. To see this, substitute $ \overline{S}(P-\delta)$ with $V$ in (\ref{deriv}). We get 
\begin{IEEEeqnarray}{rCl}
h(V) &=& n\log_2\left(1+\frac{P-\delta}{\sigma_A^2+\eta V}\right) \notag\\
&-& \mathbb{E}\left[ \frac{\|\mathbf{Y}^n-\mathbf{X}^n\|^2}{(\sigma_A^2 +\eta V)}\right]\log_2(e) \notag \\
&+&  \mathbb{E}\left[ \frac{\|\mathbf{Y}^n\|^2}{(P+\sigma_A^2 +\eta V)}\right]\log_2(e).
\label{optchk}
\end{IEEEeqnarray}
Using Lemma \ref{varl}, we get
\begin{IEEEeqnarray}{rCl}
h(V) &=& n\log_2\left(1+\frac{P-\delta}{\sigma_A^2+\eta V}\right) \notag\\
&-& \frac{\sigma_A^2 +\eta  \overline{S}(P-\delta)}{\sigma_A^2 +\eta V}\log_2(e) \notag \\
&+&  \frac{P-\delta+\sigma_A^2 +\eta \overline{S}(P-\delta) }{P-\delta+\sigma_A^2 +\eta V}\log_2(e).
\label{optchk2}
\end{IEEEeqnarray}
Differentiating with respect to $V$ and equating to $0$  gives us $V=\overline{S}(P-\delta)$ as the maximizer.
\subsection{Variation of  lower bound with memory} \label{memv}
Consider $\overline{S}(P)$ as defined in (\ref{si3}). For fixed $P$, $\overline{S}(P)$ decreases as memory $N$ increases, which causes an overall increase in the lower bound. Hence with increasing memory, our lower bounds increase for a fixed $P$. This is illustrated in Fig. \ref{plot1} for the example in Section \ref{numer}.

If we were to consider the scenario where i.i.d. inputs (not necessarily Gaussian but with $\mathbb{E}[X_i^2]=P$) are provided to the channel, then 
\begin{equation}
S_i = \frac{1}{2N+1}\sum_{k=i-N}^{i+N} |X_k|^2 \to P,\quad \mbox{a.s.}
\label{si4}
\end{equation}
as $N\to \infty$, by the strong law of large numbers \cite{athr}. 

Consider the channel
\begin{equation}
Y_i = X_i + A_i + Z_i\sqrt{\eta P^3},
\label{GNm}
\end{equation}
with average input power constraint $P$. This is nothing but the finite memory channel where $S_i$ is replaced with $P$ and is known as the Gaussian noise (GN) model for fiber optic channels (see \cite{Duris1}) . This channel is often considered (see \cite{pog1,pog2} and \cite{agr2}) while modeling non-linearities without considering the effects of memory. As this happens to be an AWGN channel, we may apply Shannon's channel capacity theorem \cite{cover} to obtain capacity as
\begin{equation}
C_{G}= \log_2\left(1+\frac{P}{\sigma_A^2 + \eta P^3}\right).
\label{GNC}
\end{equation}
We see that this capacity function first increases as a function of $P$, reaches its maximum value at
\begin{equation}
P_{G}^* = \left(\frac{\sigma_A^2}{2\eta} \right)^{1/3}
\label{GNpeak}
\end{equation}
 and then decreases to $0$. Recalling the definition of $P_N^*$ in (\ref{popt}), we see that 
\begin{equation}
\lim_{N\to \infty} P_N^* = P_{G}^*. 
\label{GNcon}
\end{equation}

Hence for $P \leq P_G^*$, we deduce that the lower bound tends to the capacity of the GN model with increasing memory. Fig. \ref{plot1} illustrates this effect for the example in Section \ref{numer}. 
\begin{figure}%
\centering
\includegraphics[scale=0.43]{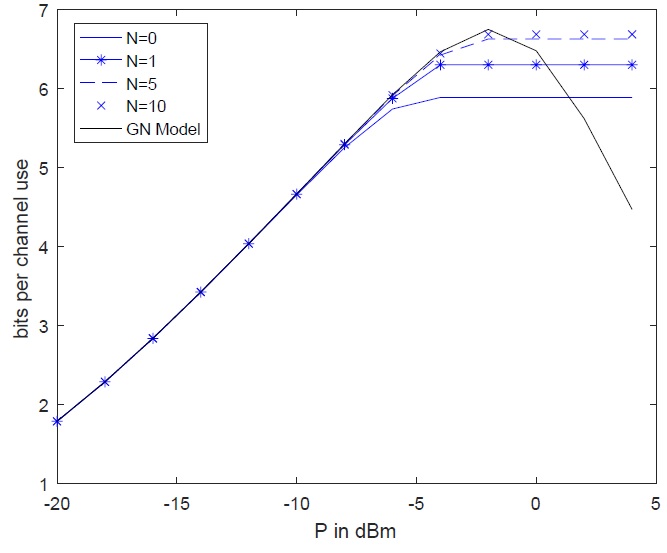}%
\caption{Plot of lower bounds with increasing $N$ in comparison with the capacity of GN model.}%
\label{plot1}%
\end{figure}

\section{Algorithm for channel capacity lower bound} \label{algorz}
We now describe an algorithm that calculates a lower bound on channel capacity for the channel with memory. The algorithm follows principles discussed in \cite{vs1,arn}. This could yield results better than our theoretical lower bound since we directly work with the given channel i.e. there is no auxiliary channel. Moreover we have the freedom to pick any input distribution albeit with quantization. For simplicity, we will assume that the memory $N=1$ throughout our discussion.

\subsection{Message passing algorithm for i.i.d. inputs}
To compute the mutual information, we need to compute $h(\mathbf{Y}^n)$ and $h(\mathbf{Y}^n|\mathbf{X}^{n+1})$. We note that 
\begin{equation}
h(\mathbf{Y}^n|\mathbf{X}^{n+1}) = \sum_{i=1}^n\mathbb{E}\left[\log\left(\pi e \left(\sigma_A^2 + \eta S_i^3\right)\right)\right].
\label{conen1}
\end{equation}
 However due to the i.i.d. nature of the inputs, all terms in the summation (except $i=1$) will be equal. Hence this expectation can be efficiently evaluated using Monte Carlo integration methods. We focus on computing $h(Y^n)$.

We suitably modify the message passing algorithm given in \cite{arn,vs1} to describe the channel in (\ref{chan1}). The algorithm essentially computes a Monte Carlo estimate of mutual information. We require that the input distribution have finite support and this can be achieved by quantizing continuous distributions.
\begin{algorithmic}[1]
\STATE{Generate $N_s+1$ samples of $\{x_k\}$, distributed according to $p_X$ that satisfies the input power constraints and having finite support. $N_s$ is the number of iterations we will run the algorithm to get satisfactory convergence.}
\STATE{Compute the corresponding output samples $\{y_k\}$ via (\ref{chan1}).}
\STATE{
\begin{IEEEeqnarray}{rCl}
\mu_1(x_1,x_2) &\leftarrow& p_X(x_1)p_X(x_2)p_{Y|X_1,X_2}(y_1|x_1,x_2) \notag\\
&&\forall (x_1,x_2).\notag
\label{tost}
\end{IEEEeqnarray}
}
\STATE{$\lambda_1 \leftarrow \sum_{x_1,x_2}\mu_1(x_1,x_2)$.}
\STATE{$\mu_1(x_1,x_2) \leftarrow \mu_1(x_1,x_2)/\lambda_1~ \forall (x_1,x_2)$. }
\FOR{$k=2$ to $N_s$}
\STATE{\begin{IEEEeqnarray}{lCl}
\mu_k(x_k,x_{k+1}) &\leftarrow& \sum_{x_{k-1}}\mu_{k-1}(x_{k-1},x_k)p_X(x_{k+1}) \notag\\
&\times&p_{Y_k|X_{k-1},X_{k},X_{k+1}}(y_k|x_{k-1},x_k,x_{k+1}) \notag \\
&\forall& (x_k,x_{k+1}). \notag
\label{rec1}
\end{IEEEeqnarray}
}
\STATE{$\lambda_k  \leftarrow \sum_{x_{k},x_{k+1}}\mu_k(x_{k},x_{k+1}) $.}
\STATE{$\mu_k(x_{k},x_{k+1}) \leftarrow \mu_k(x_{k},x_{k+1})/\lambda_k$.}
\ENDFOR
\STATE{$h(Y^n) = -\sum_{i=1}^{N_s} \log(\lambda_i)/N_s  $.} 
\end{algorithmic}
Analogous extensions are possible for $N > 1$. We use this algorithm to numerically evaluate a lower bound for channel capacity of (\ref{chan1}) in Section \ref{numer}.

\section{Numerical Results} \label{numer}
We compare our results with those provided in \cite{Duris1}. We use the same system parameters as in \cite{Duris1}. Hence we choose the nonlinearity parameter $\eta = 7244$ $\mbox{W}^{-2}$ and the ASE noise variance as $\sigma_A^2 = 4.1\times10^{-6}$ W. Fig. \ref{plot1} is a plot of our results using (\ref{LB}) and Fig. \ref{Durisgr} plots the lower bounds from \cite{Duris1}. 

\begin{figure}%
\includegraphics[scale=0.6]{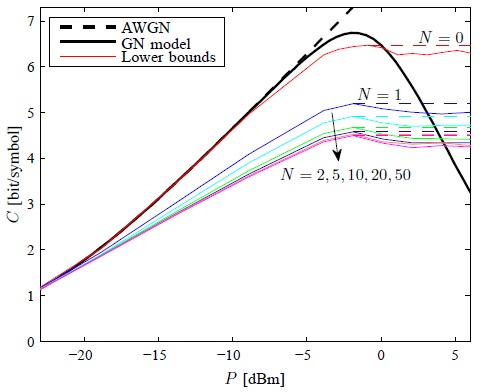}%
\caption{Plot of lower bounds of capacity from \cite{Duris1}.}%
\label{Durisgr}%
\end{figure}

For convenience, we have plotted these curves together in Fig. \ref{algc}. We infer from the plots that our bounds fare better than the bounds in \cite{Duris1} for $N\geq 1$. Moreover, unlike in \cite{Duris1}, our bounds increase with increasing memory $N$ as explained in Section \ref{memv}. 

We also evaluate the lower bound with i.i.d. quantized complex Gaussian inputs for varying input power constraints via the algorithm in Section \ref{algorz}. To do this, first generate a Gaussian random variable $g$ with variance $V$. Next, given $0 < \varepsilon < 1$, we split the interval $[-x_T,x_T]$ into $2(N_q-1)$ equal sub intervals, along with the intervals $[x_T,\infty)$ and $(-\infty,-x_T]$; where $x_T=2VQ^{-1}(\varepsilon/2)$, where $Q^{-1}$ is the inverse $Q$ function.Then if $g$ belongs to one of the subintervals, say $[a,b)$, then output $a$ if $g\geq 0$ or $b$ otherwise. It follows that the variance of the samples generated in this way will have variance less than $V$. To generate a complex quantized Gaussian sample $cg$, generate $g_1$ and $g_2$ quantized Gaussian with variance $V/2$ and use $cg=g_1+ig_2$. We take $N_q=20$ and $\varepsilon = 10^{-5}$ for quantization of the Gaussian input and $N_s=10000$. Fig. \ref{algc} compares the algorithmic bound (with monotone extension) with the theoretical results derived earlier. We see that the algorithmic bound improves over the theoretical bound in a few cases. The bound (\ref{LB}) is tighter than that in \cite{Duris1} and unlike the bound in \cite{Duris1}, is in closed form. We additionally ran the algorithm for a quantized uniform input but the best bound was given by the quantized Gaussian.

The optimization problem in (\ref{optz})  was solved using MATLAB's global optimization toolbox with the number of mixtures $K=5,10$ and $20$ and the results indicate that the optimum is within 1 percent of the theoretical result. This implies that, for our channel, the Gaussian lower bound (\ref{LB}) is good.
\begin{figure}%
\includegraphics[width=\columnwidth]{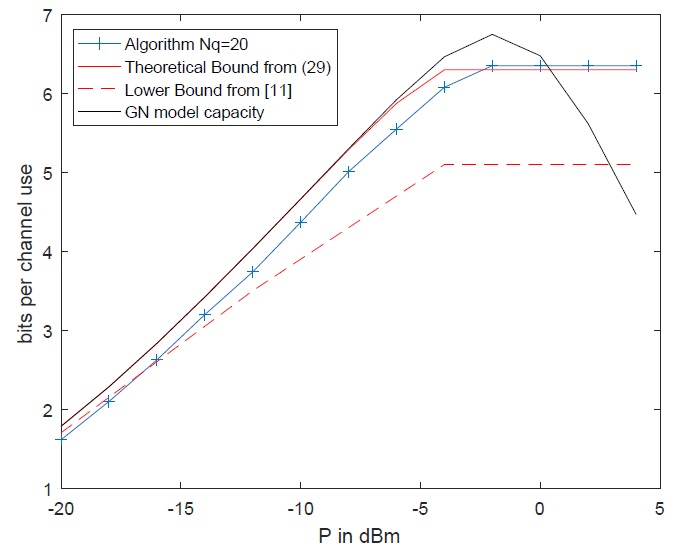}%
\caption{Comparison of algorithmically calculated lower bound with theoretical lower bound for N=1.}%
\label{algc}%
\end{figure}

\section{Conclusion} \label{concl}
We have derived a new improved lower bound in closed form on the channel capacity of an optical channel with memory. Unlike the previously available bound, we have shown that the bound improves with increase in memory. This bound could possibly be tightened by choosing a non Gaussian input and/or with a correlated input sequence to the channel. By obtaining a lower bound via a mixture of Gaussian distributions and via the general algorithm, the Gaussian bound seems to be tight. A useful, computable upper bound on the channel capacity will help in further checking the accuracy of the lower bounds. 


\section*{Acknowledgment}
The authors would like to thank Deekshith P.K. for his helpful comments and suggestions while preparing this manuscript.
\appendix[Denseness of CGM distributions]
The following proof is derived from Chapter 3, Theorem 4.2 of \cite{asmus} with some modifications. It suffices to look at distributions on $\mathbb{R}^2$ as any distribution on $\mathbb{C}$ is equivalent to a distribution on $\mathbb{R}^2$. Moreover, according to \cite{asmus}, it suffices to show that distributions with finite support over a compact set, say $[-A,A]^2$, are approximated in a weak sense by CGM distributions.

Let $\mathcal{P}_A$ be the set of all distributions with finite support over $[-A,A]^2$. Consider $F\in \mathcal{P}_A$ and suppose it has $K$ atoms $\{(x_k,y_k)\}_{k=1}^K$ with corresponding probabilities $\{\alpha_k\}_{k=1}^K$. Observe that the degenerate distribution of every atom $\{(x_k,y_k)\}$ is weakly approximated by the joint normals $\{\mathcal{N}(x_k,1/m)\times \mathcal{N}(y_k,1/m)\}_{m\geq 1}$.

Hence with
\begin{equation}
\sum_{k=1}^K \alpha_k  \{\mathcal{N}(x_k;1/m)\times \mathcal{N}(y_k;1/m)\} \stackrel{w}{\to} F,
\label{mixg1}
\end{equation}
and the fact that we may replace the corresponding joint normals above with $\mathcal{CN}(x_k+iy_k;\mathbf{P_m})$ where $\mathbf{P_m}=\bigl[\begin{smallmatrix}1/m & 0\\ 0 & 1/m\end{smallmatrix}\bigr]$, we get our result.

\bibliographystyle{IEEETran}
\bibliography{Globib}

\begin{thebibliography}{10}
\providecommand{\url}[1]{#1}
\csname url@samestyle\endcsname
\providecommand{\newblock}{\relax}
\providecommand{\bibinfo}[2]{#2}
\providecommand{\BIBentrySTDinterwordspacing}{\spaceskip=0pt\relax}
\providecommand{\BIBentryALTinterwordstretchfactor}{4}
\providecommand{\BIBentryALTinterwordspacing}{\spaceskip=\fontdimen2\font plus
\BIBentryALTinterwordstretchfactor\fontdimen3\font minus
  \fontdimen4\font\relax}
\providecommand{\BIBforeignlanguage}[2]{{%
\expandafter\ifx\csname l@#1\endcsname\relax
\typeout{** WARNING: IEEEtran.bst: No hyphenation pattern has been}%
\typeout{** loaded for the language `#1'. Using the pattern for}%
\typeout{** the default language instead.}%
\else
\language=\csname l@#1\endcsname
\fi
#2}}
\providecommand{\BIBdecl}{\relax}
\BIBdecl

\bibitem{Duris2}
E.~Agrell, G.~Durisi, and P.~Johannisson, ``Information-theory-friendly models
  for fiber-optic channels: A primer,'' in \emph{Information Theory Workshop
  (ITW), 2015 IEEE}.\hskip 1em plus 0.5em minus 0.4em\relax IEEE, 2015, pp.
  1--5.

\bibitem{Kram2}
G.~Kramer, M.~I. Yousefi, and F.~R. Kschischang, ``Upper bound on the capacity
  of a cascade of nonlinear and noisy channels,'' in \emph{Information Theory
  Workshop (ITW), 2015 IEEE}.\hskip 1em plus 0.5em minus 0.4em\relax IEEE,
  2015, pp. 1--4.

\bibitem{Ess1}
R.-J. Essiambre, G.~J. Foschini, G.~Kramer, and P.~J. Winzer, ``Capacity limits
  of information transport in fiber-optic networks,'' \emph{Physical Review
  Letters}, vol. 101, no.~16, p. 163901, 2008.

\bibitem{Strk1}
J.~Stark, ``Fundamental limits of information capacity for optical
  communications channels,'' in \emph{Proceedings of European Conference on
  Optical Communication}, 1999, pp. 1--28.

\bibitem{Strk2}
P.~P. Mitra and J.~B. Stark, ``Nonlinear limits to the information capacity of
  optical fibre communications,'' \emph{Nature}, vol. 411, no. 6841, pp.
  1027--1030, 2001.

\bibitem{lapi}
A.~Lapidoth, S.~M. Moser, and M.~A. Wigger, ``On the capacity of free-space
  optical intensity channels,'' \emph{IEEE Transactions on Information Theory},
  vol.~55, no.~10, pp. 4449--4461, 2009.

\bibitem{Mos1}
S.~M. Moser, ``Capacity results of an optical intensity channel with
  input-dependent gaussian noise,'' \emph{IEEE Transactions on Information
  Theory}, vol.~58, no.~1, pp. 207--223, 2012.

\bibitem{CDN}
H.~Ghourchian, G.~Aminian, A.~Gohari, M.~Mirmohseni, and M.~Nasiri-Kenari, ``On
  the capacity of signal dependent noise channels,'' \emph{arXiv preprint
  arXiv:1702.03590}, 2017.

\bibitem{dar}
R.~Dar, M.~Shtaif, and M.~Feder, ``New bounds on the capacity of the nonlinear
  fiber-optic channel,'' \emph{Optics letters}, vol.~39, no.~2, pp. 398--401,
  2014.

\bibitem{Kram1}
R.-J. Essiambre, G.~Kramer, P.~J. Winzer, G.~J. Foschini, and B.~Goebel,
  ``Capacity limits of optical fiber networks,'' \emph{Journal of Lightwave
  Technology}, vol.~28, no.~4, pp. 662--701, 2010.

\bibitem{Duris1}
E.~Agrell, A.~Alvarado, G.~Durisi, and M.~Karlsson, ``Capacity of a nonlinear
  optical channel with finite memory,'' \emph{Journal of Lightwave Technology},
  vol.~32, no.~16, pp. 2862--2876, 2014.

\bibitem{arim}
S.~Arimoto, ``An algorithm for computing the capacity of arbitrary discrete
  memoryless channels,'' \emph{IEEE Transactions on Information Theory},
  vol.~18, no.~1, pp. 14--20, 1972.

\bibitem{blah}
R.~Blahut, ``Computation of channel capacity and rate-distortion functions,''
  \emph{IEEE transactions on Information Theory}, vol.~18, no.~4, pp. 460--473,
  1972.

\bibitem{vs1}
V.~Sharma and S.~Singh, ``Entropy and channel capacity in the regenerative
  setup with applications to markov channels,'' in \emph{Information Theory,
  2001. Proceedings. 2001 IEEE International Symposium on}.\hskip 1em plus
  0.5em minus 0.4em\relax IEEE, 2001, p. 283.

\bibitem{arn}
D.-M. Arnold, H.-A. Loeliger, P.~O. Vontobel, A.~Kavcic, and W.~Zeng,
  ``Simulation-based computation of information rates for channels with
  memory,'' \emph{IEEE Transactions on Information Theory}, vol.~52, no.~8, pp.
  3498--3508, 2006.

\bibitem{cover}
T.~M. Cover and J.~A. Thomas, \emph{Elements of information theory}.\hskip 1em
  plus 0.5em minus 0.4em\relax John Wiley \& Sons, 2012.

\bibitem{csis}
I.~Csiszar and J.~K{\"o}rner, \emph{Information theory: coding theorems for
  discrete memoryless systems}.\hskip 1em plus 0.5em minus 0.4em\relax
  Cambridge University Press, 2011.

\bibitem{verdu}
S.~Verdu \emph{et~al.}, ``A general formula for channel capacity,'' \emph{IEEE
  Transactions on Information Theory}, vol.~40, no.~4, pp. 1147--1157, 1994.

\bibitem{Agar}
G.~P. Agrawal, \emph{Nonlinear fiber optics}.\hskip 1em plus 0.5em minus
  0.4em\relax Academic press, 2007.

\bibitem{athr}
K.~B. Athreya and S.~N. Lahiri, \emph{Measure theory and probability
  theory}.\hskip 1em plus 0.5em minus 0.4em\relax Springer Science \& Business
  Media, 2006.

\bibitem{pog1}
P.~Poggiolini, A.~Carena, V.~Curri, G.~Bosco, and F.~Forghieri, ``Analytical
  modeling of nonlinear propagation in uncompensated optical transmission
  links,'' \emph{IEEE Photonics technology letters}, vol.~23, no.~11, pp.
  742--744, 2011.

\bibitem{pog2}
G.~Bosco, P.~Poggiolini, A.~Carena, V.~Curri, and F.~Forghieri, ``Analytical
  results on channel capacity in uncompensated optical links with coherent
  detection,'' \emph{Optics Express}, vol.~19, no.~26, pp. B440--B451, 2011.

\bibitem{agr2}
L.~Beygi, E.~Agrell, P.~Johannisson, M.~Karlsson, and H.~Wymeersch, ``A
  discrete-time model for uncompensated single-channel fiber-optical links,''
  \emph{IEEE Transactions on Communications}, vol.~60, no.~11, pp. 3440--3450,
  2012.

\bibitem{asmus}
S.~Asmussen, \emph{Applied probability and queues}.\hskip 1em plus 0.5em minus
  0.4em\relax Springer Science \& Business Media, 2008, vol.~51.

\end{thebibliography}
%
%

\end{document}